\numberwithin{equation}{section}
\renewcommand{\section}{\@startsection{section}{1}{0pt}{20pt}{6pt}{\large\bf}}
\renewcommand{\@seccntformat}[1]{\csname the#1\endcsname.\ }
\def\footnoterule{\kern -3pt \hrule width 2.7 true cm \kern 2.6pt}
\def\vs{\vspace}
\def\EE{\mathsf E}
\def\QQ{\mathsf Q}
\def\wt{\widetilde}
\newcommand{\p}{\! +\! }
\newcommand{\m}{\! -\! }
\newtheorem{theorem}{Theorem}[section]
\newtheorem{remark}[theorem]{Remark}
\begin{document}

\title{\textbf{Closed form optimal exercise boundary of the American put option}}
\author{Yerkin Kitapbayev\thanks{Department of Mathematics, NC State University, Raleigh NC, USA; \texttt{ykitapb@ncsu.edu}}
}
\maketitle


{\par \leftskip=2.6cm \rightskip=2.6cm \footnotesize

We present three models of stock price with time-dependent interest rate, dividend yield, and volatility, respectively, that allow for explicit forms of the optimal exercise boundary of the finite maturity American put option. The optimal exercise boundary satisfies the nonlinear integral equation of Volterra type. 
We choose time-dependent parameters of the model so that the integral equation for the exercise boundary can be solved in the closed form. 
We also define the contracts of put type with time-dependent strike price that support the explicit optimal exercise boundary.
\par}

\footnote{{\it Mathematics Subject Classification 2010.} Primary
91G20, 60G40. Secondary 60J60, 35R35, 45G10.}

\footnote{{\it Key words and phrases:} American put option, geometric Brownian motion, optimal stopping, free-boundary problem, integral equation.}


\vs{-18pt}

\vs{-18pt}

\section{Introduction}

One of the important problems in mathematical finance is the pricing of American put options. The explicit solution has been obtained in the perpetual case under   the classical Black-Scholes model. However,  in the finite maturity case the problem has not been solved in the closed form. The main obstacle is the time dependence of the optimal exercise boundary and option price. 
There were some theoretical characterizations of the option price and optimal exercise boundary in terms of, e.g., integral equation, free-boundary PDE system, series representations etc.

Several numerical approaches have been used to tackle the problem in the case of Black-Scholes model. We will mention just some of them. One of the most  popular methods in practice is the binomial tree model by \cite{CRR-1979}.
Another approach is to reduce the American option pricing problem to a free-boundary PDE system and the latter can be solved, e.g., by  the finite difference method (see \cite{BS-1977}). Then, there was a sequence of papers by 
\cite{K-1990}, \cite{J-1991}, and \cite{CJM-1992},  where the so-called early exercise premium (EEP) formula for the American option price has been derived. Based on this result, it was shown that the optimal exercise boundary satisfies the integral equation of Volterra type. \cite{Pe-2005} proved that the optimal boundary is the unique solution to this integral equation. This led to the development of numerical schemes to tackle the corresponding integral equation as it cannot be solved in the closed form. 
Another strand of the literature was devoted to the exploitation of Monte-Carlo based methods (see e.g. \cite{LS-2001} among others). Finally, several closed form approximations have been introduced (see e.g. \cite{BD-1996}, \cite{BAW-1987}).

In this paper, we employ an integral equation approach. The main argument is to deviate from the standard model with constant parameters and choose the deterministic functions for the interest rate (or dividend yield) to obtain the integral equation that can be solved in the closed form. 
Hence, we provided the extension of the standard Black-Scholes model such that the American put option has an explicit optimal exercise policy.
We also specify the extension with time-dependent  volatility function that allows us to determine the optimal exercise boundary as the solution to a simple algebraic equation. 
Finally, we consider the case of the standard model with constant parameters but the contract with time-dependent strike.
We select the latter so that the exercise policy can be found in the closed form. 

Our paper is somewhat related to the recent work by \cite{KS-2019} where so-called inverse optimal stopping problems have been solved. They considered
general diffusion framework, and the goal was to determine a time-dependent function $\pi(t)$ that is added to the original payoff in order to obtain the desired stopping rule. Some general existence  results
and representation formulas for $\pi(t)$ were established. 
The difference between our paper and \cite{KS-2019} is that we mostly seek the model parameters instead of incremental payoff function $\pi(t)$ and also by considering a particular case of American put option under extension of Black-Scholes model we obtained explicit results.

Finally, we will briefly discuss possible applications of the results and directions for future research.
One way to employ the representations in this paper is to consider them as the closed form approximations of the American put option price under the standard model. 
Second, the finite maturity stopping problems can be found in other mathematical finance applications, e.g. defaultable debt pricing, real options, mortgage pricing etc. Due to intractability of finite horizon problems, the arguments of this paper can be used to create examples with closed form solution to these problems. 
Third, one can try to extend the arguments of this paper to stochastic volatility or jump models with some time-dependent parameters so that explicit optimal exercise rules- can be achieved.

This paper is structured as follows. In Section \ref{rev} we review the well-known results for American put option under the standard model with constant parameters.  Sections \ref{int}, \ref{dividend}, \ref{volat} and \ref{strike} present expressions for interest rate, dividend yield, volatility and strike functions, respectively, that allow for the optimal exercise boundaries in the closed form.

\bigskip
\section{Standard model: Review}\label{rev}

In this section we briefly review known results for the American put option problem under geometric Brownian motion model with constant parameters. Let us assume that the dynamics of the asset price $X$ under the risk-neutral measure  $\QQ$ is given by
\begin{equation}
dX_t=(r-\delta)X_tdt+\sigma X_t dW_t
\end{equation}
for $t>0$, where $r>0$ is the interest rate, $\delta\ge 0$ is the dividend yield,   $\sigma>0$ is the constant volatility, and $W$ is a standard Brownian motion (SBM) under $\QQ$.
It is well known that to determine the price $V(t,x)$ of American put option on $X$ at time $t$ with strike $K$ and maturity $T>0$, one can solve the following optimal stopping problem
\begin{equation}\label{prob-interest}
V(t,x)=\sup_{t\le \tau\le T}\EE_{t,x}\left[e^{-r(\tau-t) }(K - X_\tau)^+\right]
\end{equation}
where $\EE_{t,x}$ is the expectation under $\QQ$ given that 
$X_t=x$, and the supremum is taken over all $\mathcal{F}^X$-stopping times $\tau$. 

Another well known result is that the optimal exercise policy $\tau^*$ can be described by the optimal exercise boundary $b(t)$ such that
\begin{equation}
    \tau^*=\inf\{s\in [t,T]:X_s\le b(s) \}.
\end{equation}
Hence, the goal is to find the pair $(V,b)$. 
\medskip

It has been shown in several papers, e.g., \cite{K-1990}, \cite{J-1991},
\cite{CJM-1992},  that the exercise boundary satisfies the nonlinear integral equation of Volterra type
\begin{align}\label{IE}
K-b(t)=&V^e(t,b(t))\\
&+rK\int_t^T e^{-r(u-t)}N\left(\frac{1}{\sigma\sqrt{u-t}}\left(\log \frac{ b(u)}{ b(t)}- \left(r\m\delta\m\frac{\sigma^2}{2}\right)(u-t)\right)\right)du\notag\\
&-\delta b(t)\int_t^T e^{-\delta(u-t)}N\left(\frac{1}{\sigma\sqrt{u-t}}\left(\log \frac{ b(u)}{ b(t)}- \left(r\m\delta\p\frac{\sigma^2}{2}\right)(u-t)\right)\right)du\notag
\end{align}
for $t\in [0,T)$ with $b(T-)=K\cdot\min(1,r/\delta)$, where 
$N$ is the cdf of $\mathcal{N} (0,1)$ and $V_e$ is the European option price
\begin{align}
V^e(t,x)= &Ke^{-r(T-t)}N\left(\frac{1}{\sigma\sqrt{T-t}}\left(\log \frac{ K}{x}\m \left(r\m\delta\m\frac{\sigma^2}{2}\right)(T-t)\right)\right)
\\
&-xe^{-\delta(T-t)} N\left(\frac{1}{\sigma\sqrt{T-t}}\left(\log \frac{ K}{x}-\left(r\m\delta\p\frac{\sigma^2}{2}\right)(T-t)\right)\right)\notag
\end{align}
for $t\in[0,T)$ and $x>0$. 
Later it was proven by \cite{Pe-2005} that the optimal exercise boundary $b$ is the unique solution to the equation \eqref{IE} in the class of continuous functions. To the best of our knowledge, this equation does not have explicit solution but it can be solved numerically by backward induction and using some quadrature scheme. 
Once the boundary $b$ is obtained, the American option price can be computed using so-called early exercise premium (EEP) representation formula
\begin{align}\label{EEP}
V(t,x)=&V^e(t,x)
+\EE_{t,x}\left[\int_t^T e^{-r(u-t)}(rK-\delta X_u)I(X_u\le b(u))du\right]\notag\\
=&V^e(t,x)\notag\\
&+rK\int_t^T e^{-r(u-t)}N\left(\frac{1}{\sigma\sqrt{u-t}}\left(\log \frac{ b(u)}{ x}- \left(r\m\delta\m\frac{\sigma^2}{2}\right)(u-t)\right)\right)du\notag\\
&-\delta x\int_t^T e^{-\delta(u-t)}N\left(\frac{1}{\sigma\sqrt{u-t}}\left(\log \frac{ b(u)}{x}- \left(r\m\delta\p\frac{\sigma^2}{2}\right)(u-t)\right)\right)du\notag
\end{align}
for $t\in[0,T)$ and $x>0$. It essentially decomposes the American option price as the sum of European option price and EEP. 
The latter takes into the account the local benefits $(rK-\delta X_u)$  of early exercise  at time $u$ given that it is optimal to stop, i.e., $I(X_u\le b(u))$. Then these benefits are discounted to time $t$ and integrated over the interval $[t,T]$. The second equality follows from Fubini's theorem and the distribution of $X_u$. 


\section{Time-dependent interest rate}\label{int}

1. Let us consider American put option problem with strike $K$ and maturity $T$. We assume that the dynamics of the asset price $X$ under the risk-neutral measure $\QQ$  is given by
\begin{equation}
dX_t=(r(t)\m \delta)X_tdt+\sigma X_t dW_t
\end{equation}
with  deterministic interest rate $r(t)>0$ (to be specified), constant dividend yield $\delta\ge 0$ and constant volatility
$\sigma>0$.  The goal of this section is to provide particular form of interest rate that allows for the closed form representation of the optimal exercise rule.

As in the standard model the price of American put can be written as the value function of the  stopping problem
\begin{equation}\label{prob-interest}
V(t,x)=\sup_{t\le \tau\le T}\EE_{t,x}\left[e^{-\int_t^\tau r(s)ds }(K - X_\tau)^+\right]
\end{equation}
for $t\in[0,T)$ and $x>0$.
\medskip

The analysis below is straightforward extension of the constant interest case so we will focus only on main results. As we have Markovian setting, it is natural to define the exercise and waiting regions, respectively,
\begin{align}
&\mathcal{E}=\{(t,x)\in[0,T)\times (0,\infty): V(t,x)=K-x\}\\
&\mathcal{C}=\{(t,x)\in[0,T)\times (0,\infty): V(t,x)>K-x\}.
\end{align}
Now let us apply Ito-Tanaka's formula for the discounted payoff in order to gain some insight into the structure of the exercise region
\begin{align}
\EE_{t,x}\left[e^{-\int_t^\tau r(s)ds }(K - X_\tau)^+\right]
=&(K-x)^+ -\EE_{t,x}\left[\int_t^\tau e^{-\int_t^u r(s)ds }r(u)KI(X_u\le K)du\right]\\
&+\,\EE_{t,x}\left[\int_t^\tau e^{-\int_t^u r(s)ds }\delta X_uI(X_u\le K)du\right]\notag\\
&+\frac{1}{2}\EE_{t,x}\left[\int_t^\tau e^{-\int_t^u r(s)ds }d\ell_u^K(X)\right]\notag
\end{align}
where $\ell^K(X)$ is the local time that the process $X$ `spends' at $K$. The right-hand side can be explained as follows: the first term is the immediate payoff, the second and third terms are respectively the instantaneous  losses and benefits (due to postponed interests $rKdu$ and  collected dividends $\delta X_udu$, respectively) of waiting to exercise when the option is in-the-money.
Hence, one can deduce that there exists the optimal exercise boundary $b(t)$ defined on $[0,T)$ such that
\begin{align}
\mathcal{E}=\{(t,x)\in[0,T)\times (0,\infty): x\le b(t)\}.
\end{align}

2. Now we can apply the same arguments as for the case of constant interest rate to establish EEP representation for the American put price 
\begin{align}\label{ir-amer-price}
V(t,x)=&V^e(t,x)\\
&+K\int_t^T e^{-\int_t^u r(s)ds}r(u)N\left(\frac{1}{\sigma\sqrt{u-t}}\left(\log \frac{ b(u)}{x}-\int_t^u  \left(r(s)\m\delta\m \frac{\sigma^2}{2}\right)ds\right)\right)du\notag\\
&- x \int_t^T e^{-\delta(u-t)}\delta N\left(\frac{1}{\sigma\sqrt{u-t}}\left(\log \frac{ b(u)}{x}-\int_t^u  \left(r(s)\m\delta\p\frac{\sigma^2}{2}\right)ds\right)\right)du\notag
\end{align}
for all $t\in[0,T)$ and $x>0$, where $V_e$ is the European option price given as
\begin{align}
V^e(t,x)= &Ke^{-\int_t^T r(s)ds}N\left(\frac{1}{\sigma\sqrt{T-t}}\left(\log \frac{ K}{x}\m\int_t^T \left(r(s)\m\delta\m\frac{\sigma^2}{2}\right)ds\right)\right)
\\
&-xe^{-\delta(T-t)}\cdot N\left(\frac{1}{\sigma\sqrt{T-t}}\left(\log \frac{ K}{x}-\int_t^T \left(r(s)\m\delta \p\frac{\sigma^2}{2}\right)ds\right)\right)\notag
\end{align}
and $N(\cdot)$ is the cdf of $\mathcal{N}(0,1)$.
Now using the continuous pasting condition at $x=b(t)$, we obtain 
 the integral equation for the optimal exercise boundary $ b$  in \eqref{prob-interest}
\begin{align}\label{IE-interest}
K-b(t)=&V^e(t,b(t))\\
&+K\int_t^T e^{-\int_t^u r(s)ds}r(u)N\left(\frac{1}{\sigma\sqrt{u-t}}\left(\log \frac{ b(u)}{ b(t)}-\int_t^u  \left(r(s)\m\delta\m\frac{\sigma^2}{2}\right)ds\right)\right)du\notag\\
&-\delta\, b(t) \int_t^T e^{-\delta(u-t)}N\left(\frac{1}{\sigma\sqrt{u-t}}\left(\log \frac{ b(u)}{b(t)}-\int_t^u  \left(r(s)\m\delta\p\frac{\sigma^2}{2}\right)ds\right)\right)du\notag
\end{align}
for $t\in [0,T)$ with $b(T-)=K\min(1,r(T)/\delta)$.
\medskip

One can adopt the proof from \cite{Pe-2005} to show that the integral equation \eqref{IE-interest} has a unique solution in the class of continuous functions such that $\delta b(t)-r(t)K<0$ for $t\in[0,T)$. 
We will now choose the interest rate function $r(t)$ so that $r(T-)>\delta$ (i.e., $b(T-)=K$) and the integral equation \eqref{IE-interest} has explicit continuous  solution that satisfies $b(t)<\frac{r(t)}{\delta} K$ for $t\in[0,T)$.

\begin{theorem} Let us assume that the interest rate is given by
\begin{equation}\label{ir}
   r(t)=\frac{ n\left(\sqrt{(2\delta+\sigma^2)(T-t)}\right)}{1+\frac{2\sigma }{\sqrt{2\delta+\sigma^2}}\left(N\left(\sqrt{(2\delta+\sigma^2)(T-t)}\right)-\frac{1}{2}\right)}\cdot\frac{\sigma}{\sqrt{T-t}}+\delta+\frac{\sigma^2}{2}
\end{equation}
for $t\in[0,T)$ and where $n(\cdot)$ is the pdf of $\mathcal{N}(0,1)$.
Then the optimal exercise boundary has closed form 
expression
\begin{equation}\label{ir-b}
b(t)=\frac{ K}{1+\frac{2\sigma }{\sqrt{2\delta+\sigma^2}}\left(N\left(\sqrt{(2\delta+\sigma^2)(T-t)}\right)-\frac{1}{2}\right)} , \quad t\in[0,T).
\end{equation}

\end{theorem}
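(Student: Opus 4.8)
The plan is to verify, by direct substitution, that the candidate boundary $b$ in \eqref{ir-b} together with the interest rate $r$ in \eqref{ir} solves the Volterra integral equation \eqref{IE-interest}; uniqueness of the solution to \eqref{IE-interest} (cited from \cite{Pe-2005}, whose argument extends to the time-dependent coefficient case exactly as the rest of the section does) then identifies this $b$ as the optimal exercise boundary. The first step is to introduce the abbreviation $\gamma(t)$ so that $b(t)=Ke^{-\int_t^T\gamma(s)\,ds}$, read off $\gamma$ from \eqref{ir-b} by logarithmic differentiation, and record the two consequences that make the argument work: first, $\log b(u)-\log b(t)=-\int_t^u\gamma(s)\,ds$, so the arguments of the normal cdfs in \eqref{IE-interest} collapse to $-\frac{1}{\sigma\sqrt{u-t}}\int_t^u(r(s)-\delta\pm\frac{\sigma^2}{2}+\gamma(s))\,ds$; and second, the specific form of $r$ in \eqref{ir} should be exactly what is needed to force $r(s)-\delta-\frac{\sigma^2}{2}+\gamma(s)$ and $r(s)-\delta+\frac{\sigma^2}{2}+\gamma(s)$ into shapes whose time-integrals are elementary. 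I expect that with $\Sigma^2:=2\delta+\sigma^2$ one gets $\int_t^u\gamma(s)\,ds$ expressible through $N(\sqrt{\Sigma^2(T-t)})-N(\sqrt{\Sigma^2(T-u)})$, and that $r(s)-\delta-\frac{\sigma^2}{2} = \gamma(s)\cdot(\text{something})$, so that the whole integrand becomes a perfect derivative.

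The second step is to reduce the right-hand side of \eqref{IE-interest} at $x=b(t)$. Grouping the European price $V^e(t,b(t))$ with the two integral terms, one rewrites everything in terms of a single variable $z(u):=\frac{1}{\sigma\sqrt{u-t}}\int_t^u(r(s)-\delta-\frac{\sigma^2}{2}+\gamma(s))\,ds$ (and its $+\frac{\sigma^2}{2}$ analogue), and notes that the discount factors $e^{-\int_t^u r(s)\,ds}$ and $e^{-\delta(u-t)}$ combine with $b(u)$ and $b(t)$ respectively in the way familiar from the Black--Scholes put-call structure. The goal is to show that the sum telescopes: the $rK$-integral should integrate to $K-V^e(t,b(t))$ plus a remainder that cancels the $\delta b(t)$-integral, leaving precisely $K-b(t)$. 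Concretely, I would try to exhibit antiderivatives $F(u),G(u)$ with $\frac{d}{du}\big[e^{-\int_t^u r}\,K\,N(z_-(u))\big]$ and $\frac{d}{du}\big[e^{-\delta(u-t)}\,b(t)\,N(z_+(u))\big]$ matching the integrands up to the explicitly computable $n(\cdot)$ cross-terms (these cross-terms cancel between the two lines by the same identity that underlies the Black--Scholes greeks, namely $Ke^{-r\tau}n(d_-)=xe^{-\delta\tau}n(d_+)$ on the boundary), so that the integrals evaluate cleanly at the endpoints $u=t$ and $u=T$, using $b(T-)=K$ and $N(0)=\frac12$.

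The third step is bookkeeping at the endpoints: at $u=T$ the boundary terms must reproduce $V^e(t,b(t))$ (so that it cancels the explicit $V^e(t,b(t))$ on the right), and at $u=t$ the argument $z_\pm(t)$ must be evaluated as a limit — here one checks that $\frac{1}{\sigma\sqrt{u-t}}\int_t^u(\cdots)\,ds\to 0$ as $u\downarrow t$ since the integrand is bounded near $s=t$, giving $N(0)=\frac12$ and hence the constant $\frac12\cdot(\text{something})$ that combines to give $b(t)$ via the denominator in \eqref{ir-b}. One also verifies the terminal condition $b(T-)=K$ directly from \eqref{ir-b} (the denominator tends to $1$) and confirms $r(T-)>\delta$, consistent with the discussion preceding the theorem.

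The main obstacle is the middle step: finding the right antiderivatives and confirming that all the $n(\cdot)$ terms generated by differentiating the normal cdfs cancel in pairs. This is where the precise algebraic form of $r(t)$ in \eqref{ir} is forced, and the computation is delicate because the chain rule on $N(z_\pm(u))$ produces a $z_\pm'(u)$ factor that mixes $r$, $\gamma$, and the $1/\sqrt{u-t}$ singularity; one has to check that the combination is exactly integrable rather than merely that it has the right leading behaviour. A clean way to organize this, which I would adopt, is to reverse-engineer: posit that $b$ has the form $Ke^{-\int_t^T\gamma}$ for unknown $\gamma$, substitute into \eqref{IE-interest}, differentiate once in $t$ to kill the integral sign, and obtain an ODE/algebraic relation that $\gamma$ (equivalently $r$) must satisfy; solving that relation should yield \eqref{ir-b} and \eqref{ir}, after which the verification above is really just running the construction forwards. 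Everything else — measurability, applicability of the EEP formula \eqref{ir-amer-price}, and uniqueness — is inherited verbatim from Section \ref{int} and \cite{Pe-2005}.
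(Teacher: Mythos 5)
Your overall strategy---postulate $b(t)=Ke^{-\int_t^T\gamma(s)\,ds}$, substitute into the integral equation \eqref{IE-interest}, and reduce it to an algebraic relation for $\gamma$---is exactly the paper's, but your outline stops short of the one idea that makes the computation go through, and what you put in its place would not. The linchpin is the specific choice $\gamma(t)=r(t)-\delta-\tfrac{\sigma^2}{2}$. Since $\log\bigl(b(u)/b(t)\bigr)=+\int_t^u\gamma(s)\,ds$ (note your sign is flipped: $b(u)/b(t)=e^{\int_t^u\gamma}$), this choice makes the argument of the first normal cdf in \eqref{IE-interest} \emph{identically zero}, so that integral is just $\tfrac{K}{2}\int_t^T e^{-\int_t^u r}r(u)\,du=\tfrac{K}{2}\bigl(1-e^{-\int_t^T r}\bigr)$, and it makes the argument of the second cdf exactly $-\sigma\sqrt{u-t}$, so the remaining integral is the elementary Gaussian integral $\int_t^T e^{-\delta(u-t)}N(-\sigma\sqrt{u-t})\,du$, evaluated by one integration by parts; this is where $\sqrt{2\delta+\sigma^2}$ and the functional form of \eqref{ir-b} come from. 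The same choice collapses the two cdf arguments in $V^e(t,b(t))$ to $0$ and $-\sigma\sqrt{T-t}$. You gesture at this (``$r(s)-\delta-\tfrac{\sigma^2}{2}=\gamma(s)\cdot(\text{something})$'') but never fix the relation, and your proposed substitute mechanism---exhibiting antiderivatives for $e^{-\int_t^u r}KN(z_-(u))$ and $e^{-\delta(u-t)}b(t)N(z_+(u))$ with cross-terms cancelling via the Black--Scholes identity $Ke^{-r\tau}n(d_-)=xe^{-\delta\tau}n(d_+)$---is not how the terms combine here: there is no such pairwise cancellation of $n(\cdot)$ terms in this computation, and the $rK$-integral does not produce $K-V^e(t,b(t))$; the equation closes instead as $K-Ke^{-\int_t^T\gamma}=\tfrac{K}{2}-\tfrac{K}{2}e^{-\int_t^T\gamma}+Ke^{-\int_t^T\gamma}\tfrac{\sigma}{\sqrt{2\delta+\sigma^2}}\bigl(N(\sqrt{(2\delta+\sigma^2)(T-t)})-\tfrac12\bigr)$, which is then solved for $e^{\int_t^T\gamma}$ and differentiated to recover $\gamma$ and hence $r$.

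Two smaller points. First, your logic is ``verify the given $(r,b)$ pair solves \eqref{IE-interest}, then invoke uniqueness''; the paper instead treats $\gamma$ (equivalently $r$) as the unknown and \emph{derives} \eqref{ir} and \eqref{ir-b} simultaneously from the simplified equation, which is cleaner because the theorem's $r(t)$ is not an independently given datum but is defined by the requirement that the postulated form of $b$ be consistent. Your appeal to uniqueness is a legitimate (and arguably necessary for full rigor) supplement, but asserting that Peskir's uniqueness proof ``extends verbatim'' to time-dependent coefficients is itself a nontrivial claim you would need to justify. Second, your final ``differentiate in $t$ to kill the integral sign'' suggestion is unnecessary and would reintroduce the $1/\sqrt{u-t}$ singularity you worry about; with the correct choice of $\gamma$ there is no delicate integrability issue at all---every integral is explicit.
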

\begin{proof}
Let us define  
\begin{equation}
    \gamma(t)=r(t)-\delta-\sigma^2/2
\end{equation}
for $t\in[0,T)$
The main  idea is to postulate the 
 particular form of the boundary $b$ such that the integral term in \eqref{IE-interest} can simplified. This can be achieved by selecting
\begin{equation}
b(t)= K e^{-\int_t^T \gamma(s)ds}
\end{equation}
for $t\in[0,T)$  which implies
\begin{equation}
    N\left(\frac{1}{\sigma\sqrt{u-t}}\left(\log \frac{ b(u)}{ b(t)}-\int_t^u  \left(r(s)\m\delta\m\frac{\sigma^2}{2}\right)ds\right)\right)=N(0)=\frac{1}{2}
\end{equation}
for $u\in[t,T)$.

We can then rewrite the equation \eqref{IE-interest} as follows using integration by parts
\begin{align*}
 K-K e^{-\int_t^T \gamma(s)ds}=&V^{e} \left(t,K e^{-\int_t^T \gamma(s)ds}\right) +\frac{K}{2}\int_t^T e^{-\int_t^u r(s)ds}r(u)du \\
 &-\delta K e^{-\int_t^T \gamma(s)ds}\int_t^T e^{-\delta(u-t)}N\left(-\sigma\sqrt{u-t}\right)du
 \\
 =& Ke^{-\int_t^T r(s)ds} N\left(\frac{1}{\sigma\sqrt{T-t}}\left(\log \frac{ K}{ b(t)}-\int_t^T (r(s)\m\delta\m\sigma^2/2)ds\right)\right) \\ &-K e^{-\int_t^T (r(s)-\sigma^2/2)ds} N\left(\frac{1}{\sigma\sqrt{T-t}}\left(\log \frac{ K}{ b(t)}-\int_t^T (r(s)\m\delta\p\sigma^2/2)ds\right)\right) \\ &+\frac{K}{2} \left(1-e^{-\int_t^T r(s)ds}\right)\\
 &+ K e^{-\int_t^T \gamma(s)ds} e^{-\delta(T-t)}N\left(-\sigma\sqrt{T-t}\right)-\frac{K}{2} e^{-\int_t^T \gamma(s)ds}\\
 &-K e^{-\int_t^T \gamma(s)ds}\int_t^Te^{-\delta(u-t)}dN(-\sigma\sqrt{u-t})\\=& \frac{K}{2} e^{-\int_t^T r(s)ds}-K e^{-\int_t^T (r(s)-\sigma^2/2)ds} N\left(-\sigma \sqrt{T-t}\right)+\frac{K}{2} \left(1-e^{-\int_t^T r(s)ds}\right)\\ &+ K e^{-\int_t^T \gamma(s)ds} e^{-\delta(T-t)}N\left(-\sigma\sqrt{T-t}\right)-\frac{K}{2} e^{-\int_t^T \gamma(s)ds}\\
 &+K e^{-\int_t^T \gamma(s)ds}\frac{\sigma }{\sqrt{2\delta+\sigma^2}}\left(N\left(\sqrt{(2\delta+\sigma^2)(T-t)}\right)-\frac{1}{2}\right)\\=& \frac{K}{2} -\frac{K}{2} e^{-\int_t^T \gamma(s)ds}+K e^{-\int_t^T \gamma(s)ds}\frac{\sigma }{\sqrt{2\delta+\sigma^2}}\left(N\left(\sqrt{(2\delta+\sigma^2)(T-t)}\right)-\frac{1}{2}\right)
\end{align*}
so that 
\begin{equation}
  e^{\int_t^T \gamma(s)ds}=1+\frac{2\sigma }{\sqrt{2\delta+\sigma^2}}\left(N\left(\sqrt{(2\delta+\sigma^2)(T-t)}\right)-\frac{1}{2}\right)
\end{equation}
or
\begin{equation}
    \gamma(t)=\frac{ n\left(\sqrt{(2\delta+\sigma^2)(T-t)}\right)}{1+\frac{2\sigma }{\sqrt{2\delta+\sigma^2}}\left(N\left(\sqrt{(2\delta+\sigma^2)(T-t)}\right)-\frac{1}{2}\right)}\cdot\frac{\sigma}{\sqrt{T-t}}
\end{equation}
for $t\in[0,T)$. Hence, the interest rate must be given as
\begin{equation}
   r(t)=\frac{ n\left(\sqrt{(2\delta+\sigma^2)(T-t)}\right)}{1+\frac{2\sigma }{\sqrt{2\delta+\sigma^2}}\left(N\left(\sqrt{(2\delta+\sigma^2)(T-t)}\right)-\frac{1}{2}\right)}\cdot\frac{\sigma}{\sqrt{T-t}}+\delta+\frac{\sigma^2}{2}.
\end{equation}
 We note that $r(t)$ goes to $+\infty$ as $t$ approaches $T$ so that $r(T-)>\delta$ as we assumed at the beginning. Now having expression for $\gamma(t)$, the boundary $b$ is
\begin{equation}
b(t)= K e^{-\int_t^T \gamma(s)ds}=\frac{ K}{1+\frac{2\sigma }{\sqrt{2\delta+\sigma^2}}\left(N\left(\sqrt{(2\delta+\sigma^2)(T-t)}\right)-\frac{1}{2}\right)} 
\end{equation}
for  $t\in[0,T)$.
\end{proof}

\begin{remark}
We note that the interest rate function given in \eqref{ir} is increasing in $t$ on $[0,T)$ as the numerator of the first term is increasing in $t$ and the denominator is decreasing. 
\end{remark}

\begin{remark}
As can be seen from the expression \eqref{ir}, the shortcoming of the result is that the interest rate $r(t)$ goes to $+\infty$ as $t$ approaches $T$ (see also left panel of Figure \ref{int_rate}). However, if one invests in money market account at $t$, the terminal value at $T$ is still finite
\begin{equation}
 e^{\int_t^T r(s)ds}<\infty.
 \end{equation}
Despite its unrealistic behavior, this result can be still used as the closed form approximation for the market environment where the interest rate is expected to spike up during the lifetime of an option.

\end{remark}

\begin{remark}
If the dividend yield $\delta=0$, then the expressions can be simplified to
\begin{equation}
   r(t)=\frac{ \sigma n\left(\sigma\sqrt{T-t}\right)}{2\sqrt{T-t}N\left(\sigma\sqrt{T-t}\right)} +\frac{\sigma^2}{2}
\end{equation}
and
\begin{equation}\label{ir-b}
b(t)= K e^{-\int_t^T \gamma(s)ds}=\frac{ K}{2 N\left(\sigma\sqrt{T-t}\right)} 
\end{equation}
for $t\in[0,T)$.
\end{remark}

\begin{remark}
We note that when $T$ goes $+\infty$, the optimal exercise boundary converges to the  previously known optimal threshold 
$K/\left(1+\frac{\sigma }{\sqrt{2\delta+\sigma^2}}\right)$ for the perpetual put option when $r=\delta+\sigma^2/2$.
\end{remark}

\begin{remark}
Despite the fact that we have found the closed form expression for the exercise boundary $b$ in the chosen model of interest rate, the American option price is still given in the integral form \eqref{ir-amer-price}. 
\end{remark}

\begin{figure}[t]
 \begin{minipage}{0.49\linewidth}
  \centerline{\includegraphics[scale=0.9]{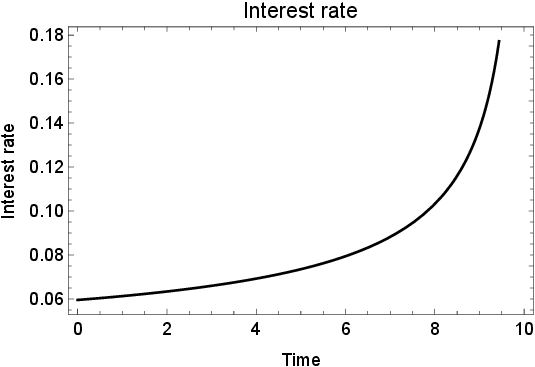}}
 \end{minipage}
\hfill
 \begin{minipage}{0.49\linewidth}
  \centerline{\includegraphics[scale=0.9]{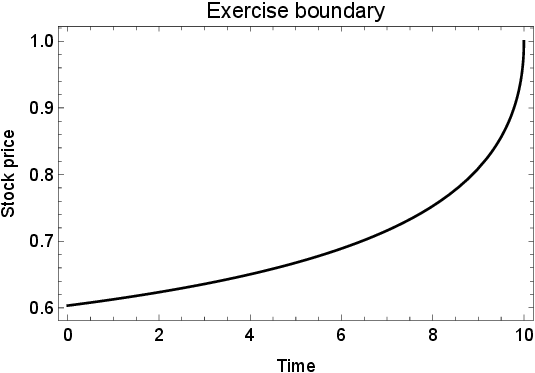}}
 \end{minipage}
 
   \caption{This figure displays the interest rate function (left)  given by \eqref{ir} and the optimal exercise boundary (right) given by \eqref{ir-b}. The parameters are $\delta=0,
 \sigma = 0.3, T = 10, K=1.$}  \label{int_rate}
   \end{figure}

\section{Time-dependent dividend yield}\label{dividend}

1. In this section we derive similar result. The aim again is to offer the model that has closed form expression for the optimal exercise boundary. The difference is that the interest rate is constant now but 
the dividend yield will be given by  particular deterministic function. 

We consider American put option problem with strike $K$ and maturity $T$
\begin{equation}\label{prob-div}
V(t,x)=\sup_{t\le \tau\le T}\EE_{t,x}\left[e^{-r(\tau-t) }(K - X_\tau)^+\right]
\end{equation}
where the asset price $X$ is given by 
\begin{equation}
dX_t/X_t=(r-\delta(t))dt+\sigma dW_t
\end{equation}
with deterministic dividend yield $\delta(t)>0$ (to be specified), constant interest rate $r>0$ and volatility
$\sigma>0$, and $W$ is a SBM under the risk-neutral measure $\QQ$.   The similar arguments from the previous section can be applied here, hence we omit details and highlight only main results and derivations. 
\medskip

2. As in the previous section, 
there exists the optimal exercise boundary $b$ and the corresponding integral equation can be written as
\begin{align}\label{IE-div}
K-b(t)=&V^e(t,b(t))\\
&+rK\int_t^T e^{-r(u-t)}N\left(\frac{1}{\sigma\sqrt{u-t}}\left(\log \frac{ b(u)}{ b(t)}-\int_t^u \left(r\m\delta(s)\m\frac{\sigma^2}{2}\right)ds\right)\right)du\notag
\\
&-b(t)\int_t^T e^{-\int_t^u \delta(s)ds}\delta (u)N\left(\frac{1}{\sigma\sqrt{u-t}}\left(\log \frac{ b(u)}{ b(t)}-\int_t^u \left(r\m\delta(s)\p\frac{\sigma^2}{2}\right)ds\right)\right)du\notag
\end{align}
for $t\in [0,T]$ with $b(T-)=K\min(1,r/\delta(T))$, where $V_e$ is the European option price 
\begin{align}
V^e(t,x)= &Ke^{-r(T-t)}N\left(\frac{1}{\sigma\sqrt{T-t}}\left(\log \frac{ K}{x}\m\int_t^T \left(r\m\delta(s)\m\frac{\sigma^2}{2}\right)ds\right)\right)
\\
&-xe^{-\int_t^T \delta(s)ds}N\left(\frac{1}{\sigma\sqrt{T-t}}\left(\log \frac{ K}{x}-\int_t^T \left(r\m\delta(s)\p\frac{\sigma^2}{2}\right)ds\right)\right)\notag
\end{align}
for $t\in[0,T)$ and $x>0$. 
\medskip

Now we will select the dividend yield function $\delta(t)$ such that $\delta(T)<r$, i.e., $b(T)=K$ and such that the boundary $b(t)$ has particular form
\begin{equation}
b(t)= K e^{-\int_t^T \gamma(s)ds}
\end{equation}
for $t\in[0,T)$, where $\gamma(t)\equiv     r-\delta(t)+\sigma^2/2$. Note that $\gamma(t)$ here has a different definition from the previous section.
\medskip

Using this form of the boundary $b$ and integration by parts, we can rewrite the equation \eqref{IE-div} as follows
\begin{align*}
 K-K e^{-\int_t^T \gamma(s)ds}=&V^{e} \left(t,K e^{-\int_t^T \gamma(s)ds}\right) +rK\int_t^T e^{-r(u-t)}N(\sigma\sqrt{u-t})du\\
 &-\frac{Ke^{-\int_t^T \gamma(s)ds}}{2}\int_t^T e^{-\int_t^u \delta(s)ds}\delta(u)du\\=& Ke^{-r(T-t)} N\left(\frac{1}{\sigma\sqrt{T-t}}\left(\log \frac{ K}{ b(t)}-\int_t^T (r\m\delta(s)\m\sigma^2/2)ds\right)\right) \\ &-b(t) e^{-\int_t^T \delta(s)ds} N\left(\frac{1}{\sigma\sqrt{T-t}}\left(\log \frac{ K}{ b(t)}-\int_t^T (r\m\delta(s)\p\sigma^2/2)ds\right)\right) \\ &+rK\int_t^T e^{-r(u-t)}N(\sigma\sqrt{u-t})du\\
 &-\frac{Ke^{-\int_t^T \gamma(s)ds}}{2} \left(1-e^{-\int_t^T \delta(s)ds}\right)\\=& Ke^{-r(T-t)}N(\sigma\sqrt{T-t})-\frac{K}{2} e^{-\int_t^T (\gamma(s)+\delta(s))ds} \\
 &-Ke^{-r(T-t)}N(\sigma\sqrt{T-t})+\frac{K}{2}
 \\
 &+K\int_t^T e^{-r(u-t)}dN(\sigma\sqrt{u-t})\\
 &-\frac{Ke^{-\int_t^T \gamma(s)ds}}{2} \left(1-e^{-\int_t^T \delta(s)ds}\right)\\
 =&\frac{K}{2}-\frac{Ke^{-\int_t^T \gamma(s)ds}}{2}+\frac{\sigma K}{\sqrt{2r+\sigma^2}}\left(N\left(\sqrt{(2r+\sigma^2)(T-t)}\right)-0.5\right)
\end{align*}
so that 
\begin{equation}
e^{-\int_t^T \gamma(s)ds}=1-\frac{2\sigma }{\sqrt{2r+\sigma^2}}\left(N\left(\sqrt{(2r+\sigma^2)(T-t)}\right)-0.5\right).
\end{equation}
Taking logarithm and then differentiating both sides, we get
\begin{equation}
    \gamma(t)=\frac{ n(\sqrt{(2r+\sigma^2)(T-t)})}{1-\frac{2\sigma }{\sqrt{2r+\sigma^2}}(N(\sqrt{(2r+\sigma^2)(T-t)})-0.5)}\cdot\frac{\sigma}{\sqrt{T-t}}
\end{equation}
and
\begin{equation}
  \delta(t)=r-\frac{ n(\sqrt{(2r+\sigma^2)(T-t)})}{1-\frac{2\sigma }{\sqrt{2r+\sigma^2}}(N(\sqrt{(2r+\sigma^2)(T-t)})-0.5)}\cdot\frac{\sigma}{\sqrt{T-t}}+\frac{\sigma^2}{2}
\end{equation}
for $t\in[0,T)$. We note that $\delta(t)$ goes to $-\infty$ as $t$ approaches $T$ so that $\delta(T-)<r$ as we assumed at the beginning. Then the exercise boundary is given by
\begin{equation}\label{b-div}
b(t)= K\left(1-\frac{2\sigma }{\sqrt{2r+\sigma^2}}\left(N\left(\sqrt{(2r+\sigma^2)(T-t)}\right)-0.5\right)\right) , \quad t\in[0,T).
\end{equation}

Using the algebraic manipulations we can show that 
$$
\delta(t)b(t)-rK<0
$$
for $t\in[0,T)$ and hence we may apply the arguments from \cite{Pe-2005} to show that $b$ given above is the unique solution to \eqref{IE-div} in the class of continuous functions that satisfy  $\delta(t)b(t)-rK<0$ for $t\in[0,T)$.

We can summarize the results of this section as follows.

\begin{theorem} Let us assume that the dividend yield is given by
\begin{equation}\label{dy}
  \delta(t)=r-\frac{ n(\sqrt{(2r+\sigma^2)(T-t)})}{1-\frac{2\sigma }{\sqrt{2r+\sigma^2}}(N(\sqrt{(2r+\sigma^2)(T-t)})-0.5)}\cdot\frac{\sigma}{\sqrt{T-t}}+\frac{\sigma^2}{2}
\end{equation}
for $t\in[0,T)$,
then the optimal exercise boundary has closed form 
expression
\begin{equation}\label{dy-b}
b(t)= K\left(1-\frac{2\sigma }{\sqrt{2r+\sigma^2}}\left(N\left(\sqrt{(2r+\sigma^2)(T-t)}\right)-0.5\right)\right) 
\end{equation}
for  $t\in[0,T)$.
\end{theorem}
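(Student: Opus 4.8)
The plan is to imitate the argument behind Theorem~3.1: instead of solving the Volterra equation \eqref{IE-div} head-on, I would guess a one-parameter family of candidate boundaries for which the Gaussian arguments inside \eqref{IE-div} collapse to trivial values, substitute it in, and then read off the unique $\delta(\cdot)$ that makes the equation an identity. Concretely, I would postulate
\[
b(t)=Ke^{-\int_t^T\gamma(s)\,ds},\qquad \gamma(t)=r-\delta(t)+\tfrac{\sigma^2}{2},
\]
so that $\log\big(b(u)/b(t)\big)=\int_t^u\gamma(s)\,ds=\int_t^u\big(r-\delta(s)+\tfrac{\sigma^2}{2}\big)\,ds$. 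The point of this choice is that the argument of $N(\cdot)$ in the first early-exercise-premium integral of \eqref{IE-div} then reduces to $\sigma\sqrt{u-t}$, the argument in the second integral reduces to $0$, and the two normal arguments inside $V^e(t,b(t))$ become $\sigma\sqrt{T-t}$ and $0$. This is the exact dual of the ansatz used in Section~\ref{int}, which nulls the argument of the \emph{interest} integral; here we null the \emph{dividend} one.

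Substituting this form into \eqref{IE-div} and using $N(0)=\tfrac12$ together with $\gamma+\delta\equiv r+\tfrac{\sigma^2}{2}$, I would obtain $V^e(t,b(t))=Ke^{-r(T-t)}N(\sigma\sqrt{T-t})-\tfrac{K}{2}e^{-(r+\sigma^2/2)(T-t)}$; the second premium integral equals $-\tfrac{K}{2}e^{-\int_t^T\gamma(s)\,ds}+\tfrac{K}{2}e^{-(r+\sigma^2/2)(T-t)}$; and the first premium integral $rK\int_t^T e^{-r(u-t)}N(\sigma\sqrt{u-t})\,du$, after integration by parts, equals $-Ke^{-r(T-t)}N(\sigma\sqrt{T-t})+\tfrac{K}{2}+K\int_t^T e^{-r(u-t)}\,dN(\sigma\sqrt{u-t})$. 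The only genuine computation is the remaining Gaussian integral, which after an elementary substitution and completing the square gives
\[
\int_t^T e^{-r(u-t)}\,dN\big(\sigma\sqrt{u-t}\big)=\frac{\sigma}{\sqrt{2r+\sigma^2}}\left(N\big(\sqrt{(2r+\sigma^2)(T-t)}\big)-\tfrac12\right).
\]
Adding the three pieces, every contribution of the form $Ke^{-r(T-t)}N(\sigma\sqrt{T-t})$ and $e^{-(r+\sigma^2/2)(T-t)}$ cancels, and \eqref{IE-div} collapses to the single functional equation
\[
e^{-\int_t^T\gamma(s)\,ds}=1-\frac{2\sigma}{\sqrt{2r+\sigma^2}}\left(N\big(\sqrt{(2r+\sigma^2)(T-t)}\big)-\tfrac12\right),
\]
which is exactly \eqref{dy-b} after multiplication by $K$. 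Taking logarithms and differentiating in $t$ recovers $\gamma(t)$, hence $\delta(t)=r+\tfrac{\sigma^2}{2}-\gamma(t)$, i.e.\ formula \eqref{dy}.

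It remains to check that this candidate is admissible. For $t\in[0,T)$ one has $N\big(\sqrt{(2r+\sigma^2)(T-t)}\big)-\tfrac12\in(0,\tfrac12)$, and $\sigma/\sqrt{2r+\sigma^2}<1$ because $r>0$, so $0<b(t)<K$; moreover $b$ is continuous and increasing with $b(T-)=K$ (immediate from \eqref{dy-b}, and consistent with $\delta(T-)=-\infty<r$). Since the Ito-Tanaka argument sketched at the start of Section~\ref{dividend} already shows that the genuine optimal exercise boundary exists and solves \eqref{IE-div}, and since \eqref{IE-div} has a unique continuous solution (the time-dependent-$\delta$ analogue of \cite{Pe-2005}), the candidate must coincide with the true boundary, which proves the theorem.

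I expect the only friction to be in the middle step: keeping the European-price terms correctly aligned against the two premium integrals during the cancellation, and establishing the Gaussian identity for $\int e^{-r(u-t)}\,dN(\sigma\sqrt{u-t})$, which is what manufactures the factor $1/\sqrt{2r+\sigma^2}$ and the rescaled Gaussian cdf $N\big(\sqrt{(2r+\sigma^2)(T-t)}\big)$ appearing in \eqref{dy-b}. Everything else is bookkeeping; the only external input needed is the uniqueness of the continuous solution of \eqref{IE-div}.
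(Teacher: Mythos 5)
Your proposal is correct and follows essentially the same route as the paper: the ansatz $b(t)=Ke^{-\int_t^T\gamma(s)ds}$ with $\gamma=r-\delta+\sigma^2/2$ nulling the dividend integral, the integration by parts on the interest term, the Gaussian identity producing $\frac{\sigma}{\sqrt{2r+\sigma^2}}\bigl(N(\sqrt{(2r+\sigma^2)(T-t)})-\tfrac12\bigr)$, and the resulting functional equation for $e^{-\int_t^T\gamma(s)ds}$ all match the paper's derivation exactly. The only addition is your explicit appeal to admissibility and uniqueness of the continuous solution of the integral equation, which the paper leaves implicit.
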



\begin{remark}
The dividend yield function \eqref{dy} gets negative when $t$ is sufficiently close to $T$ and explodes to $-\infty$ at $T$. 
This can be the case for the commodities for which the storage costs will rise significantly near the end life of the option. Otherwise, we can make use of these formulas as the closed form approximations for the standard model with constant parameters. 
\end{remark}

\begin{figure}[t]
 \begin{minipage}{0.49\linewidth}
  \centerline{\includegraphics[scale=0.9]{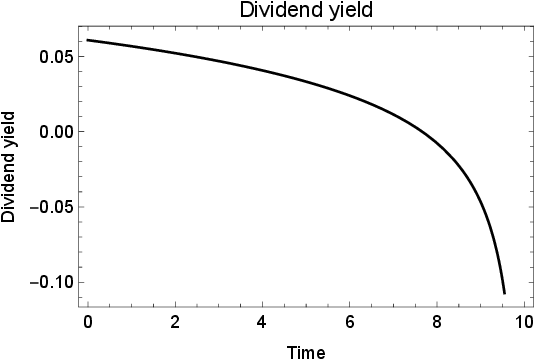}}
 \end{minipage}
\hfill
 \begin{minipage}{0.49\linewidth}
  \centerline{\includegraphics[scale=0.9]{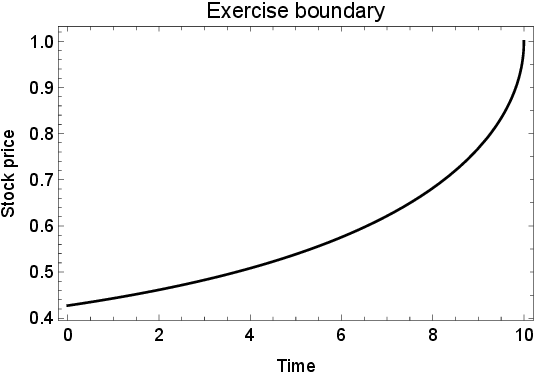}}
 \end{minipage}

   \caption{This figure displays the dividend yield  function (left)  given by \eqref{dy} and the optimal exercise boundary (right) given by \eqref{dy-b}. The parameters are $r=0.05,
 \sigma = 0.3, T = 10, K=1.$}  \label{div_yield}
   \end{figure}

\section{Time-dependent volatility}\label{volat}

In this section we assume that the interest rate is constant, the dividend yield is zero (to simplify analysis) but the volatility will be chosen as particular deterministic function of time in order to have the optimal exercise boundary in simple terms. Unlike in the previous two sections, we do not obtain the closed form for the boundary but there is simple algebraic equation that defines the boundary $b(t)$ for given $t$. Numerically it is still easier to deal with as there is no need to solve the nonlinear integral equation of Volterra type. 

Let us consider American put option problem with strike $K$ and maturity $T$
\begin{equation}\label{prob-vol}
V(t,x)=\sup_{t\le \tau\le T}\EE_{t,x}\left[e^{-r(\tau-t) }(K - X_\tau)^+\right]
\end{equation}
under the stock price model $X$
\begin{equation}
dX_t/X_t=rdt+\sigma(t) dW_t
\end{equation}
with deterministic volatility $\sigma(t)>0$ (to be specified), constant interest rate
$r>0$, and $W$ is a SBM under $\QQ$.
\medskip

We have the integral equation for the optimal exercise boundary $ b$ 
\begin{align}
K-b(t)=&V^e(t,b(t))\\
&+rK\int_t^T e^{-r(u-t)}N\left(\frac{1}{\sqrt{\int_t^u \sigma^2(u)du}}\left(\log \frac{ b(u)}{ b(t)}-\int_t^u (r\m\sigma^2(s)/2)ds\right)\right)du\notag
\end{align}
for $t\in [0,T]$ with $b(T-)=K$ and this equation has a unique solution in the class of continuous functions such that $b(t)<K$ on $[0,T)$. The European option price $V_e$  is given by
\begin{align}
V^e(t,x)= &Ke^{-r(T-t)}N\left(\frac{1}{\sqrt{\int_t^T \sigma^2(u)du}}\left(\log \frac{ K}{x}\m\int_t^T \left(r\m\frac{\sigma^2(s)}{2}\right)ds\right)\right)
\\
&-xN\left(\frac{1}{\sqrt{\int_t^T \sigma^2(u)du}}\left(\log \frac{ K}{x}-\int_t^T \left(r\p\frac{\sigma^2(s)}{2}\right)ds\right)\right)\notag
\end{align}
for $t\in[0,T)$ and $x>0$.
\bigskip

Now we will determine the volatility function $\sigma(t)$ such that the boundary has particular form
\begin{equation}
b(t)= K e^{-\int_t^T \gamma(s)ds}, \quad t\in[0,T)
\end{equation}
where $\gamma(t)=r-\sigma^2(t)/2$. 
Using this form of $b$ and integration by parts we obtain
\begin{align*}
 K-K e^{-\int_t^T \gamma(s)ds}=&V^{e} \left(t,K e^{-\int_t^T \gamma(s)ds}\right) +\frac{rK}{2}\int_t^T e^{-r(u-t)}du \\=& Ke^{-r(T-t)} N\left(\frac{1}{\sqrt{\int_t^T \sigma^2(u)du}}\left(\log \frac{ K}{ b(t)}-\int_t^T (r\m\sigma^2(s)/2)ds\right)\right) \\ &-K e^{-\int_t^T \gamma(s)ds} N\left(\frac{1}{\sqrt{\int_t^T \sigma^2(u)du}}\left(\log \frac{ K}{ b(t)}-\int_t^T (r\p\sigma^2(s)/2)ds\right)\right) \\ &+\frac{K}{2} \left(1-e^{-r(T-t)}\right)\\=& \frac{K}{2} e^{-r(T-t)}-K e^{-\int_t^T \gamma(s)ds} N\left(-\sqrt{\int_t^T \sigma^2(u)du}\right)+\frac{K}{2} \left(1-e^{-r(T-t)}\right)\\=&\frac{K}{2}-K e^{-\int_t^T \gamma(s)ds} N\left(-\sqrt{\int_t^T \sigma^2(u)du}\right)
\end{align*}
so that 
\begin{align*}
 &e^{-\int_t^T \gamma(s)ds} N\left(\sqrt{\int_t^T \sigma^2(u)du}\right)= \frac{1}{2} 
 \end{align*}
 and hence the volatility function $\sigma(t)$ satisfies the following equation
 \begin{align*}
e^{\int_t^T \frac{\sigma^2(s)}{2}ds} N\left(\sqrt{\int_t^T \sigma^2(u)du}\right)= \frac{ e^{r(T-t)}}{2} 
\end{align*}
 for $t\in[0,T)$. We can rewrite it as
  \begin{align}
e^{\frac{\phi^2(t)}{2}} N\left(\phi(t)\right)= \frac{ e^{r(T-t)}}{2}
\end{align}
where
\begin{equation}
\phi^2(t)=\int_t^T \sigma^2(u)du
\end{equation}
so that
\begin{equation}\label{vol}
  \sigma^2(t)=-2\phi(t)\phi'(t)
\end{equation}
 for $t\in[0,T)$. Thus, $\phi(t)$ can be found as the unique solution $x$ to the algebraic equation 
\begin{equation}\label{equation-phi}
    e^{\frac{x^2}{2}} N\left(x\right)= \frac{ e^{r(T-t)}}{2}
\end{equation}
for $t \in[0,T)$. It is clear that $\phi(T-)=0$ and that $\phi$ is decreasing. Once we solve the equation \eqref{equation-phi} for $\phi(t)$, we can determine the volatility $\sigma(t)$ at time $t$ using \eqref{vol}. We note that $\sigma(T-)=0$. See Figure \ref{volatility} for illustrations. 
\medskip

Hence, we derived the following result.

\begin{figure}[t]
 \begin{minipage}{0.49\linewidth}
  \centerline{\includegraphics[scale=0.9]{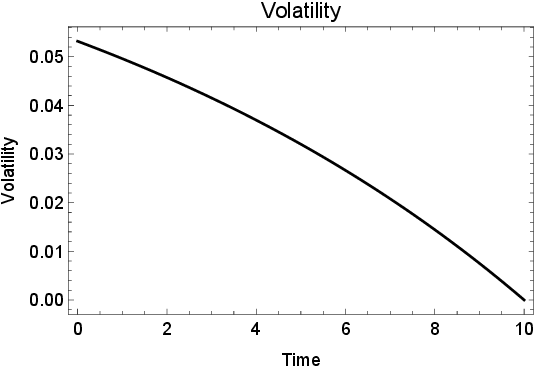}}
 \end{minipage}
\hfill
 \begin{minipage}{0.49\linewidth}
  \centerline{\includegraphics[scale=0.9]{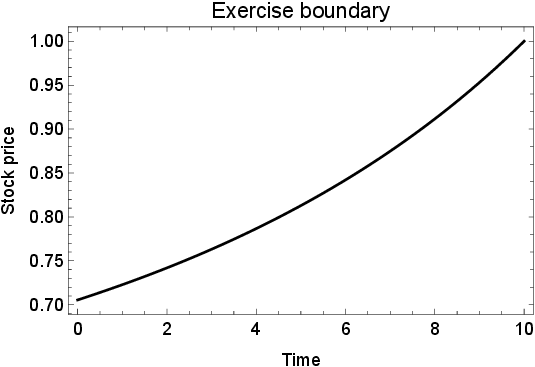}}
 \end{minipage}

   \caption{This figure displays the volatility  function (left)  given by \eqref{vol-2} and the optimal exercise boundary (right) given by \eqref{vol-b}. The parameters are $r=0.05, T = 10, K=1.$}  \label{volatility}
   \end{figure}

\begin{theorem} Let us define $\phi(t)$ as the unique solution to the algebraic equation
  \begin{align}
e^{\frac{\phi^2(t)}{2}} N\left(\phi(t)\right)= \frac{ e^{r(T-t)}}{2}
\end{align}
for $t\in[0,T)$. Then if the volatility function is defined as
\begin{equation}\label{vol-2}
  \sigma(t)=\sqrt{-2\phi(t)\phi'(t)}
\end{equation}
for $t \in[0,T)$, the optimal exercise boundary is given by 
\begin{equation}\label{vol-b}
b(t)=K e^{-r(T-t)}e^{\int_t^T \frac{\sigma^2(s)}{2}ds}=\frac{K} {2N(\phi(t))}
\end{equation}
for  $t\in[0,T)$.
\end{theorem}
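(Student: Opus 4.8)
The plan is to settle the well-posedness of the auxiliary function $\phi$, then verify by direct substitution that $b(t)=Ke^{-\int_t^T\gamma(s)\,ds}$ with $\gamma(t)=r-\sigma^2(t)/2$ solves the Volterra integral equation for $b$ stated in this section, and finally invoke uniqueness of that solution to identify $b$ with the optimal exercise boundary.

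\emph{Step 1: the function $\phi$.} Put $g(x)=e^{x^2/2}N(x)$, so that \eqref{equation-phi} reads $g(\phi(t))=e^{r(T-t)}/2$. A short computation gives $g'(x)=e^{x^2/2}\big(xN(x)+n(x)\big)$; this is clearly positive for $x\ge 0$, and for $x<0$ the Mills-ratio bound $N(x)<n(x)/|x|$ gives $xN(x)+n(x)>0$ as well, so $g$ is a strictly increasing bijection from $\R$ onto $(0,\infty)$ with $g(0)=1/2$, $\lim_{x\to-\infty}g(x)=0$ and $\lim_{x\to\infty}g(x)=\infty$. Since $e^{r(T-t)}/2\ge 1/2$ on $[0,T]$, the equation \eqref{equation-phi} has a unique root $\phi(t)\ge 0$, with $\phi(t)>0$ for $t<T$ and $\phi(T-)=0$. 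The implicit function theorem then gives $\phi\in C^1[0,T)$ with $\phi'(t)=-\tfrac r2\,e^{r(T-t)}/g'(\phi(t))<0$, so $\sigma(t)$ in \eqref{vol-2} is well defined, continuous and strictly positive on $[0,T)$, and $\sigma(T-)=0$. Finally, $\tfrac{d}{dt}\phi^2(t)=2\phi\phi'=-\sigma^2(t)$ by \eqref{vol} together with $\phi(T-)=0$ yields $\phi^2(t)=\int_t^T\sigma^2(u)\,du$, and in particular $\int_0^T\sigma^2=\phi^2(0)<\infty$, so the stock-price model of the section is admissible.

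\emph{Step 2: substitution into the integral equation.} From $\gamma=r-\sigma^2/2$ one gets $e^{-\int_t^T\gamma(s)\,ds}=e^{-r(T-t)}e^{\int_t^T\sigma^2(s)/2\,ds}$, which is the first equality in \eqref{vol-b}. For the Volterra term, $\log\big(b(u)/b(t)\big)=\int_t^u\gamma(s)\,ds=\int_t^u(r-\sigma^2(s)/2)\,ds$, so the argument of $N$ inside the integral is identically $0$ and that term collapses to $\tfrac{rK}{2}\int_t^T e^{-r(u-t)}\,du=\tfrac K2\big(1-e^{-r(T-t)}\big)$. Evaluating the European price at $x=b(t)$, the identity $\log(K/b(t))=\int_t^T(r-\sigma^2/2)\,ds$ makes its first normal argument vanish while its second equals $\big(\int_t^T(r-\sigma^2/2)-\int_t^T(r+\sigma^2/2)\big)\big/\sqrt{\int_t^T\sigma^2}=-\phi^2(t)/\phi(t)=-\phi(t)$, so $V^e(t,b(t))=\tfrac K2 e^{-r(T-t)}-Ke^{-\int_t^T\gamma}N(-\phi(t))$. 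Substituting these pieces, the $e^{-r(T-t)}$ contributions cancel and the integral equation reduces to $e^{-\int_t^T\gamma(s)\,ds}N(\phi(t))=1/2$. Since $e^{-\int_t^T\gamma}=e^{-r(T-t)}e^{\phi^2(t)/2}$, this is precisely $e^{\phi^2(t)/2}N(\phi(t))=e^{r(T-t)}/2$, i.e.\ the defining relation \eqref{equation-phi} of $\phi$; hence the postulated $b$ does solve the integral equation, with $b(t)=Ke^{-\int_t^T\gamma}=K/(2N(\phi(t)))$, which also yields $b(T-)=K$, consistent with the terminal condition $b(T-)=K\min(1,r/\delta)$ at $\delta=0$.

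\emph{Step 3: identification.} It remains to argue that this continuous solution of the Volterra equation is the optimal exercise boundary. As in the constant-parameter case, the uniqueness argument of \cite{Pe-2005} carries over to the present model with bounded, continuous, time-dependent volatility — the exercise region is still of the form $\{x\le b(t)\}$ with $b$ continuous and $b(T-)=K$ — so the function constructed above is indeed the boundary. The main obstacle is Step 1: one has to control $\phi$, and hence $\sigma=\sqrt{-2\phi\phi'}$, near $t=T$, where $\phi\to0$ and $\sigma\to0$, and to confirm that $\sigma$ is a legitimate (square-integrable) volatility; once the regularity and monotonicity of $\phi$ are established, Step 2 is routine bookkeeping and Step 3 is the known uniqueness result.
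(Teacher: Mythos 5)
Your proof is correct and follows essentially the same route as the paper: you postulate $b(t)=Ke^{-\int_t^T\gamma(s)\,ds}$ with $\gamma=r-\sigma^2/2$ so that the normal arguments collapse, and reduce the Volterra equation to the algebraic relation $e^{\phi^2/2}N(\phi)=e^{r(T-t)}/2$ — the same computation the paper runs in the "derivation" direction. Your Step 1 (monotonicity of $g(x)=e^{x^2/2}N(x)$ via the Mills-ratio bound, hence well-posedness of $\phi$ and admissibility of $\sigma$) and Step 3 (explicit appeal to uniqueness of the solution to the integral equation) supply rigor that the paper leaves implicit, but they do not change the underlying argument.
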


\begin{remark}
From numerical experiments we observe that the volatility function given in \eqref{vol-2} is decreasing on $[0,T)$ with $\sigma(T-)=0$. This can be appropriate for situations where there is a gradual learning about the asset dynamics and hence lower uncertainty over time.  Time $t=0$ corresponds to a peak of volatility, for example, after the earnings announcement, and then volatility decreases on $[0,T)$ with extremely low values at $T$.
\end{remark}

\section{Time-dependent strike}\label{strike}

1. In this section, we have the standard  model with constant parameters. But we  consider American put option  with time-dependent strike $K(t)$ (to be specified) so that the optimal exercise boundary is given in the closed form. In other words, we stay in the Black-Scholes model but aim to choose the contract of put type with the explicit optimal exercise policy. The problem can be formulated as
\begin{equation}\label{prob-1}
V(t,x)=\sup_{t\le \tau\le T}\EE_{t,x}\left[e^{-r(\tau-t) }(K(\tau) - X_\tau)^+\right]
\end{equation}
for $t\in[0,T)$ and $x>0$,
where the asset price $X$ follows
\begin{equation}
dX_t=rX_tdt+\sigma X_tdW_t
\end{equation}
with constant interest rate
$r>0$ and volatility parameter $\sigma>0$.
\medskip

Again, the structure of the problem is similar to previous settings. There exists the optimal exercise boundary $b$. However, in this case, the local benefits of waiting to exercise are given as
\begin{equation}
    K'(t)-rK(t)
\end{equation}
for $t\in[0,T)$. If we wait at time $t$ and do not exercise, the increase $K'(t)dt$ in the strike price is instantaneous gain but there are postponed interest payments $rK(t)$ on $K$. 

Using standard arguments we derive the integral equation for the boundary $ b$ 
\begin{align}
K(t)-b(t)=&V^e(t,b(t))\\
&+\int_t^T e^{-r(u-t)}(rK(u)\m K'(u))N\left(\frac{1}{
\sigma\sqrt{u-t}}\left(\log \frac{ b(u)}{ b(t)}\m \left(r\m\frac{\sigma^2}{2}\right)(u-t)\right)\right)du\notag
\end{align}
for $t\in [0,T]$, where $V^e$ is the price of European put option with strike $K(T)$ under standard Black-Scholes model with constant parameters.
\medskip

\begin{figure}[t]
\begin{center}
 \includegraphics[scale=1]{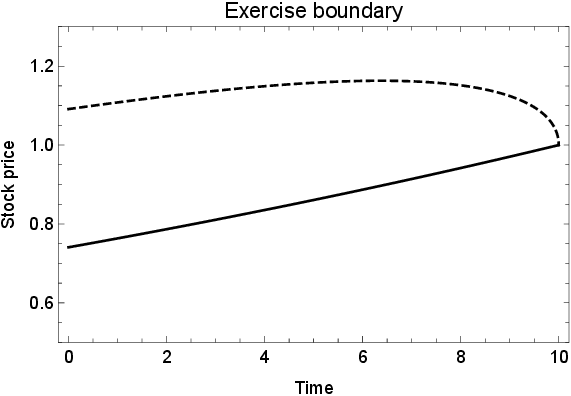}   
\end{center}

   \caption{This figure displays the strike function  $K(t)$ (dashed)  given by \eqref{strike-K} and the optimal exercise boundary $b(t)$ (solid) given by \eqref{strike-b}. The parameters are $r=0.05,
 \sigma = 0.2, T = 10, K(T)=1.$}  \label{strike-f}
   \end{figure}
   
As in the previous sections, we postulate the particular form for $b$ that makes the probability $\QQ(X_u\le b(u))$ to be 0.5. This can be achieved by
\begin{equation}\label{strike-b}
b(t)=K(T) e^{-\gamma(T-t)}
\end{equation}
for $t\in[0,T)$, where $\gamma=r-\sigma^2/2$. 
We  can then determine the strike function $K(t)$ that supports the expression for the boundary $b$ by rewriting the integral equation as follows
\begin{align*}
 K(t)-K(T) e^{-\gamma(T-t)}=&V^{e} \left(t, K(T) e^{-\gamma(T-t)}\right) \\ &+\frac{1}{2}\int_{t}^{T} e^{-r(u-t)}(rK(u)-K^{\prime}(u))  d u \\=&\frac{1}{2} K(T) e^{-r(T-t)} -K(T) e^{-\gamma(T-t)} N\left(-\sigma \sqrt{T-t}\right) \\ &-\frac{1}{2}K(T) e^{-r(T-t)}+\frac{1}{2}K(t) \\=& -K(T) e^{-\gamma(T-t)} N(-\sigma \sqrt{T-t})
 +\frac{1}{2} K(t)
\end{align*}
so that
\begin{align}\label{strike-K}
    K(t)=2 K(T) e^{-\gamma(T-t)} N(\sigma \sqrt{T-t})
\end{align}
for $t\in[0,T)$.
We can then choose any value of $K(T)$ so that we will obtain closed form expressions for $b(t)$ and $K(t)$ given by \eqref{strike-b} and \eqref{strike-K}, respectively. Figure \ref{strike-f} displays these results for given set of parameters. We note that the boundary $b$ is increasing (respectively, decreasing) if $r>\sigma^2/2$ (respectively, $r<\sigma^2/2$).
\medskip

\begin{figure}[t]
\begin{center}
 \includegraphics[scale=1]{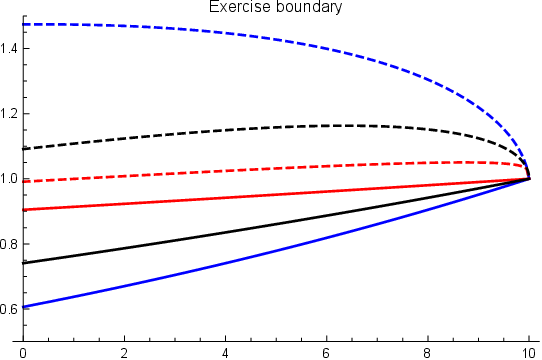}   
\end{center}

   \caption{This figure displays the strike function  $K(t)$ (dashed)  and the optimal exercise boundary $b(t)$ (solid) for different values of parameter $m$: $m=-0.02$ (red), $m=0$ (black), and $m=0.02$ (blue). The parameters are $r=0.05,
 \sigma = 0.2, T = 10, K(T)=1.$}  \label{strike-f-m}
   \end{figure}

2.  Now we will add some flexibility when define the boundary $b$. Let us assume the particular form
\begin{equation}
b(t)= K(T) e^{-\gamma(T-t)}, \quad t\in[0,T)
\end{equation}
where $\gamma=m+r-\sigma^2/2$ and $m$ is some constant that we can choose. 
After some tedious algebra, we can derive the linear integral equation of Volterra type for $K(t)$
\begin{align*}
 K(t)=\;&2 K(T) e^{-(m+r-\sigma^2/2)(T-t)} N\left(-\left(\frac{m}{\sigma}\m\sigma\right) \sqrt{T\m t}\right)\\
 &\p\frac{m}{\sigma}\int_{t}^T e^{-\left(r+\frac{m^2}{2\sigma^2}\right)(u-t)}K(u)\frac{du}{\sqrt{u-t}}
\end{align*}
for $t\in[0,T)$. Hence, given some $m$ we find corresponding $K(t)$ by solving the linear Volterra equation above, which is simpler than the standard nonlinear Volterra equation of the second kind for $b(t)$. It can be solved numerically faster than the equation for $b$. See Figure \ref{strike-f-m} for illustrations. 

Thus there is a trade-off here: if one wants to have some freedom by controlling the rate $m$, the linear integral equation for $K(t)$ must be solved numerically, and the closed form $K(t)$ exists only if $m=0$.

{}

\end{document}